\theoremstyle{thmstyleone}%
\newtheorem{theorem}{Theorem}%  
\newtheorem{property}{Property}%  
\theoremstyle{thmstylethree}%
\newtheorem{definition}{Definition}%
\title{Voronoi Diagrams of Arbitrary Order on the Sphere}
\author{Mercè Claverol \\
	Departament de Matemàtiques\\
	Universitat Politècnica de Catalunya\\
	Barcelona, Spain \\
	\texttt{merce.claverol@upc.edus} \\
\And Andrea de las Heras Parrilla \\
	Departament de Matemàtiques\\
	Universitat Politècnica de Catalunya\\
	Barcelona, Spain \\
	\texttt{andrea.de.las.heras@estudiantat.upc.edu} \\
\And Clemens Huemer \\
	Departament de Matemàtiques\\
	Universitat Politècnica de Catalunya\\
	Barcelona, Spain \\
	\texttt{clemens.huemer@upc.edu} \\}
\begin{document}
\maketitle

\begin{abstract}
	For a given set of points $U$ on a sphere $S$, the order $k$ spherical Voronoi diagram $SV_k(U)$ decomposes the surface of $S$ into regions whose points have the same $k$ nearest points of $U$. 
    %We study properties for $SV_k(U)$, using different tools: the geometry of the sphere, a labeling for the edges of $SV_k(U)$, and the inversion transformation. 
    Hyeon-Suk Na, Chung-Nim Lee, and Otfried Cheong (Comput. Geom., 2002) applied inversions to construct $SV_1(U)$. We generalize their construction for spherical Voronoi diagrams from order $1$ to any order $k$. We use that construction to prove formulas for the numbers of vertices, edges, and faces in $SV_k(U)$. These formulas were not known before.
    We obtain several more properties for $SV_k(U)$, and we also show that $SV_k(U)$ has a small orientable cycle double cover.
\end{abstract}

% keywords can be removed
\keywords{spherical Voronoi diagram, higher order Voronoi diagram, inversion, cycle double cover}

\section{Introduction}
    
    Let $U$ be a set of $n$ points on a sphere $S \subset \mathbb{R}^3$ such that no three of them lie in the same great circumference and no four of them are cocircular, i.e. $U$ is in general position, and let $1\leq k \leq n-1$ be an integer. 
    The order $k$ spherical Voronoi diagram $SV_k(U)$ decomposes the surface of $S$ into regions whose points have the same $k$ nearest points of $U$. Then, each of these regions is a face $f(P_k)$ of $SV_k(U)$ associated with a subset $P_k\subset U$ of size $k$: Each point in the interior of $f(P_k)$ has $P_k$ as its $k$ nearest neighbors from $U$. Vertices of $SV_k(U)$ are of type I if they are centers of circles on the sphere passing through three points of $U$ and enclosing $k-1$ points of $U$, and are of type II, if they are centers of circles on the sphere passing through three points of $U$ and enclosing $k-2$ points of $U$. Equivalently, the circle defining a vertex of type I (type II) passes through one (two) points of $P_k$ and through two (one) points of $U\backslash P_k$. In the literature, see e.g.~\cite{L82}, vertices of type I (type II) are often called {\it{new}} and {\it{old}}. 
    
    Many researchers studied the nearest ($k=1$) and the farthest ($k=n-1$) spherical Voronoi diagrams, see e.g.~\cite{B79,NLC02, M1971}. Specific algorithms for constructing Voronoi diagrams of order one on the sphere were given in~\cite{augenbaum1985construction, caroli2010robust, chen2003algorithm, dinis2010sweeping, renka1997algorithm, zheng2011plane}. Spherical Voronoi diagrams can also be obtained using the algorithms for Voronoi diagrams in the plane, via inversions~\cite{B78, B79, NLC02}. On the other hand, they can also be constructed by mapping the points to higher dimension and using convex hull algorithms~\cite{aurenhammer1990new}, or hyperplane arrangements~\cite{E87}. We also refer to~\cite{LPL15} for a discussion on several algorithms for constructing Voronoi diagrams.

		%For these diagrams it was seen that practically all algorithms in the plane can be adapted to the sphere. 
		
		Spherical Voronoi diagrams of order different from $k=1$ and $k=n-1$ have barely been studied. In this work we deepen in these diagrams and the properties and algorithm that we present are for Voronoi diagrams of arbitrary order $k$ on the sphere. 
    One of the most important tools that we use in our proofs is an edge labeling for $SV_k(U)$. This labeling is an extension to the sphere of the edge labeling for Voronoi diagrams in the plane~\cite{MACA21}.
    An edge that delimits a face of $SV_k(U)$ is a spherical segment of the perpendicular bisector (on the sphere) of two points $i$ and $j$ of $U$. 
    This observation induces a natural labeling of the edges of $SV_k(U)$ with the following rule:
    
    $\bullet$ {\bf{Edge rule:}}\label{edgerule}
    An edge of $SV_k(U)$ which belongs to the perpendicular bisector of points $i,j \in U$ has labels $i$ and $j$, where label $i$ is on the side (half-sphere) of the edge that contains point $i$, and label $j$ is on the other side. See Figure~\ref{fig:F-bisectors}.\\ 
		
From this rule, we deduce two more rules of the labeling of $SV_k(U)$: one rule for the vertices and one rule for the faces, whose proofs are straight-forward and also given in~\cite{MACA21}. 
    
    $\bullet$ {\bf{Vertex rule:}}
    Let $v$ be a vertex of $SV_k(U)$ and let $\{i, j, \ell\}\subset U$ be the set of labels of the edges incident to $v$. 
    The cyclic order of the labels of the edges around $v$ is $i, i, j, j, \ell, \ell$ if $v$ is of type I, and it is $i, j, \ell, i, j, \ell$ if $v$ is of type II.
      
    $\bullet$ {\bf{Face rule:}}
    In each face of $SV_k(U)$, the edges that have the same label $i$ are consecutive, and these labels $i$ are either all in the interior of the face, or are all in the exterior of the face.
    
    \begin{figure}[h!]	
    	\begin{center}
    		\includegraphics[width=0.4\textwidth]{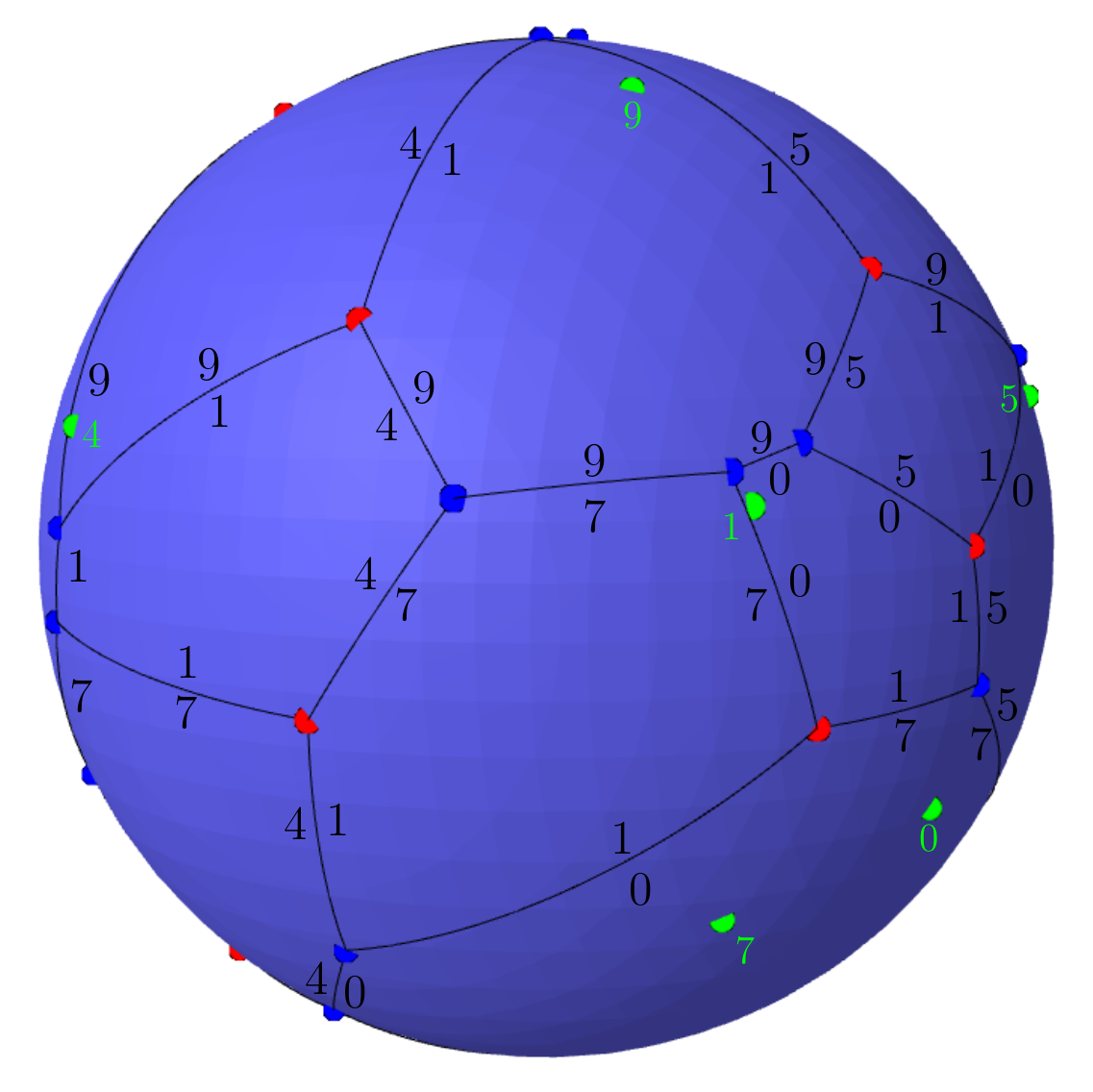}
    		\caption{The edge labeling of $SV_2(U)$ for a set $U$  of ten points $\{0,1,\ldots, 9\}$ in general position (the visible ones are drawn in green color). Vertices of type I are drawn in blue, and vertices of type II in red.}
    		\label{fig:F-bisectors}
    	\end{center}
    \end{figure}
    
    Note that when walking along the boundary of a face, in its interior (exterior), a change in the labels of its edges appears whenever we reach a vertex of type II (type I), see Figure~\ref{fig:F-bisectors}.
    
    \begin{figure}[h!]	
    	\begin{center}
    		\includegraphics[width=0.4\textwidth]{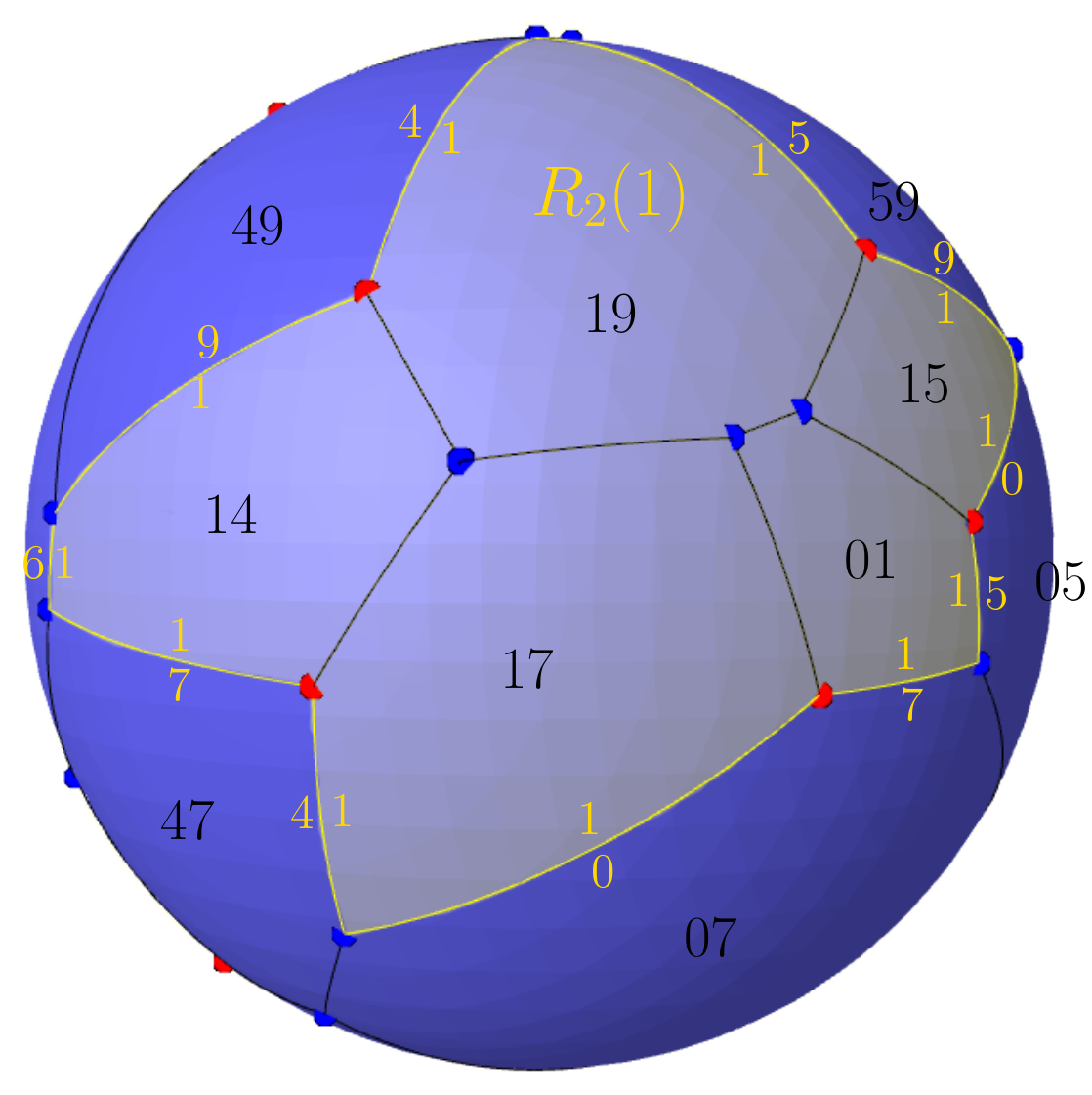}
    		\caption{$SV_2(U)$ for the  point set $U$ of Figure~\ref{fig:F-bisectors}; in each face, its two nearest neighbors are indicated. In yellow, the region $R_2(1)$ formed by all the faces of $SV_2(U)$ that have point $1$ as one of their two nearest neighbors. The boundary $B_2(1)$ of $R_2(1)$ is formed by all the edges which have the label $1$ and this label is always inside $R_2(1)$. The boundary vertices of $R_2(1)$ with an incident edge lying in the interior of $R_2(1)$ are of type II in $SV_2(U)$ and the remaining boundary vertices are of type I in $SV_2(U)$.}
    		\label{fig:R21}
    	\end{center}
    \end{figure}
    
    From this edge labeling, we observe that edges with same label $i$ always form a cycle in $SV_k(U)$; see Figure~\ref{fig:R21}. These edges with the same label $i$ enclose a region $R_k(i)$ that consists of all the points of the sphere that have point $i \in U$ as one of their $k$ nearest neighbors from $U$. 
Such regions $R_k(i)$ are closely related to Brillouin zones~\cite{J84, V00}, also called $k$-th nearest point Voronoi diagram~\cite{OBSC2000}, or $k$-th degree Voronoi diagram~\cite{E87}, defined as $R_k(i)\backslash R_{k-1}(i).$

In~\cite{EI18} it was proved that $R_k(i)$ is star-shaped and in~\cite{MACA21, dH21} it was proved that  the reflex (convex) vertices on the boundary $B_k(i)$ of $R_k(i)$ are vertices of type II (type I). 
We obtain here a new property of the regions $R_k(i)$ of spherical Voronoi diagrams:
As one of the main results of this paper, we generalize to any order the construction of spherical Voronoi diagrams defined by Hyeon-Suk Na, Chung-Nim Lee and Otfried Cheong~\cite{NLC02}, using precisely the regions $R_k(i)$ and the inversion transformation. Inversions for Voronoi diagrams were already applied in the classical work of Brown~\cite{B79, B78}. In~\cite{NLC02}, $SV_1(U)$ is computed from two planar Voronoi diagrams after applying inversions to map $U$ to the plane; two different inversion centers are used. In~\cite{NLC02} it is also shown that $SV_1(U)$ is homeomorphic to the union of a nearest and a farthest Voronoi diagram, when glued together. We generalize this to $SV_k(U)$ being homeomorphic to the union of a planar Voronoi diagram of order $k$, and one planar Voronoi diagram of order $n-k$. Furthermore, these diagrams are linked via $R_k(i)$ in $SV_k(U \cup \{i\})$, with $i$ the center of inversion, where the unbounded edges in the two planar Voronoi diagrams correspond to edges of $SV_k(U)$ intersected by $B_k(i)$. See Figure~\ref{fig:extra}.
		\begin{figure}[h!]
	\centering
	\includegraphics[scale=0.75]{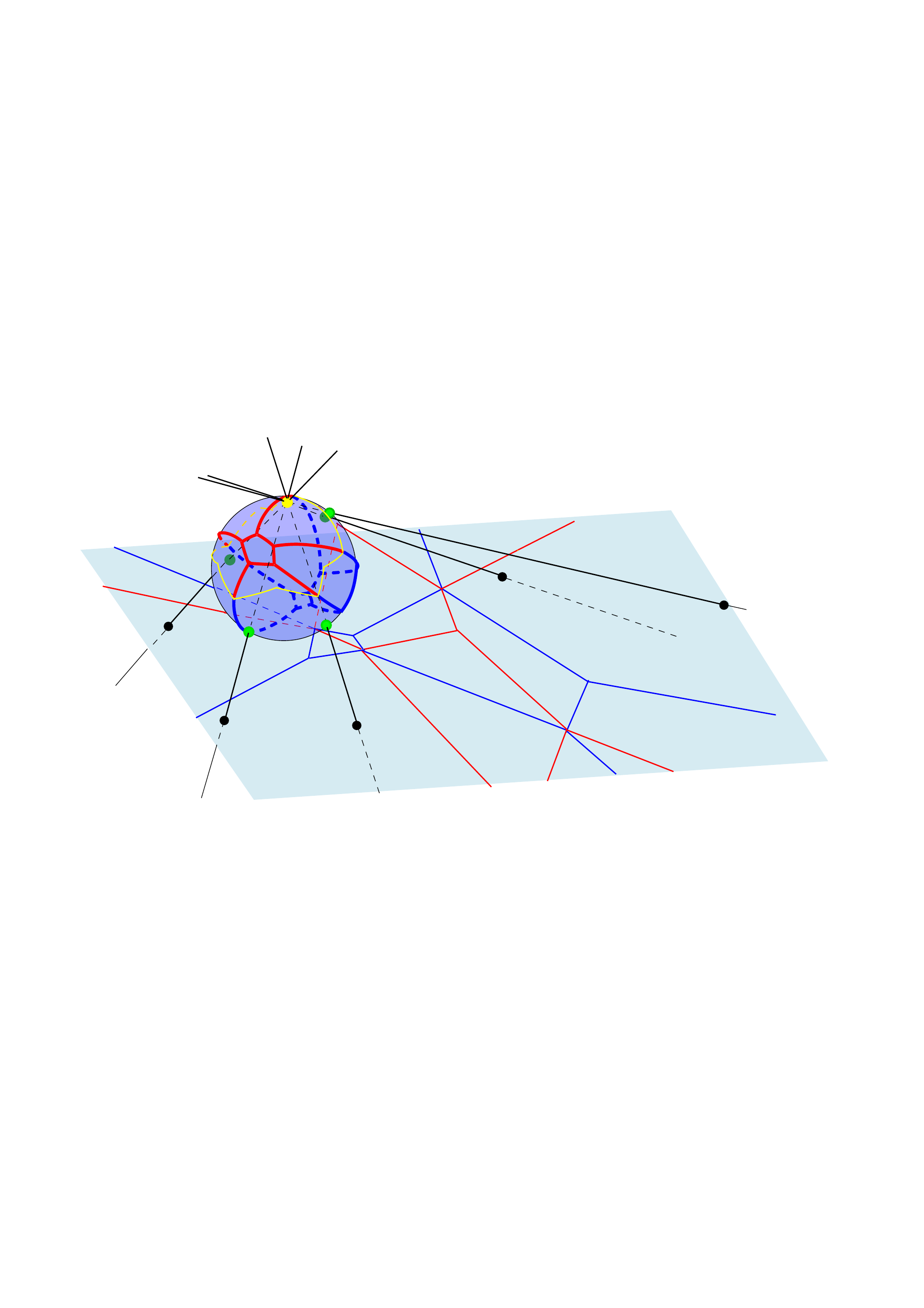}
	\caption{ $SV_k(U)$ (with $k=2$) for a set $U$ of five points (in green) on the sphere, and the inversion transformation and the homeomorphism between $SV_k(U)$ and two Voronoi diagrams in the plane of complementary order ($k=2$ in blue and $n-k=3$ in red). The unbounded edges in the plane correspond to the edges of $SV_2(U)$ intersected by $B_2(i)$ (yellow arcs), where $i$ is the inversion center (yellow point).}
	\label{fig:extra}
\end{figure}
		
    %We then use this construction of $SV_k(U)$ to prove some more properties.
    We further derive formulas for the numbers of vertices, edges, and faces of $SV_k(U)$. The proof is based on the construction of $SV_k(U)$. Surprisingly, the obtained formulas seem to be new. 
    We also obtain formulas for the number of vertices of type I and for the number of vertices of type II in $SV_k(U)$. 
	
		As another application of the edge labeling we show that $SV_k(U)$ admits a small {\it{cycle double cover}}.
		 A cycle double cover~\cite{J85} of a graph $G$ is a collection of cycles $\mathcal{C}$ such that every edge of $G$ belongs to precisely two cycles of $\mathcal{C}$. A double cover $\mathcal{C}$ is orientable if an orientation can be assigned to each element of $\mathcal{C}$ such that for every edge $e$ of $G$, the two cycles that cover $e$ are oriented in opposite directions. Much research was done on finding small cycle double cover for several classes of graphs, see for instance~\cite{Bondy90, S92, S93}.
    We show that every higher-order Voronoi diagram on the sphere admits an orientable double cover of its edges, using, precisely, the $n$ cycles $B_k(i)$ for $i=1, \ldots, n$. We refer to~\cite{MACA21} for related results on double covers of the edges of higher order Voronoi diagrams in the plane.
 Several more properties for $SV_k(U)$ are given in the thesis~\cite{dH21} by the second author.
   
\section{Properties of $SV_k(U)$}\label{sec:properties}
      
We present several properties for $SV_k(U)$ that will be used for the main result in Section~\ref{sec:plane-sphere}.
				
    \begin{property}\label{prop:complementaryVoronois}
    		Let $u^*$ be the antipodal point of a point $u$ on a sphere $S$.
    		Then $SV_k(U)=SV_{n-k}(U^*)$, where $U^*=\{ u^* : u\in U\} $.
    \end{property}
    
    The proof of this property is essentially the same as the one for the case $k=1$ given in~\cite{B79, NLC02}.
    
     \begin{proof}
         The spherical distance for points $x,y\in S$ is  $d(x,y)=\pi r-d(x,y^*)$, where $r$ is the radius of the sphere. It follows that the $k$ nearest neighbors of a point $x$ must be the $k$ farthest neighbors of $x^*$. Therefore, $x\in f(P_k)$ if and only if $x\in f(U^*\setminus P_k^*)$ where $P_k^*=\lbrace p^*:p\in P_k\rbrace$, and the property follows.
     \end{proof}
    
    \begin{property}\label{prop:antipodalstype}
        Let $v$ be a vertex of type~I of $SV_k(U)$. Then $v^*$ is a vertex of type~II of $SV_{n-k}(U)$. Similarly, if $v$ is a vertex of type~II of $SV_k(U)$ then $v^*$ is a vertex of type~I of $SV_{n-k}(U)$. See Figure~\ref{fig:CompltypePoints}.
    \end{property}
    \begin{proof}
    If $v$ is a vertex of type~I of $SV_k(U)$, then it is the center of a disk $D$ that passes through three points of $U$ and contains $(k-1)$ points of $U$ in its interior. From this, by the geometry of the sphere $S$, the remaining $(n-k-2)$ points are contained in the complementary disk $S\setminus D$ whose center is $v^*$. Therefore, $v^*$ must be a vertex of type~II of $SV_{n-k}(U)$. The symmetric argument works for $v$ of type~II. 
    \end{proof}
    
    \begin{figure}[h!]	
    		\begin{center}
    			\includegraphics[width=1\textwidth]{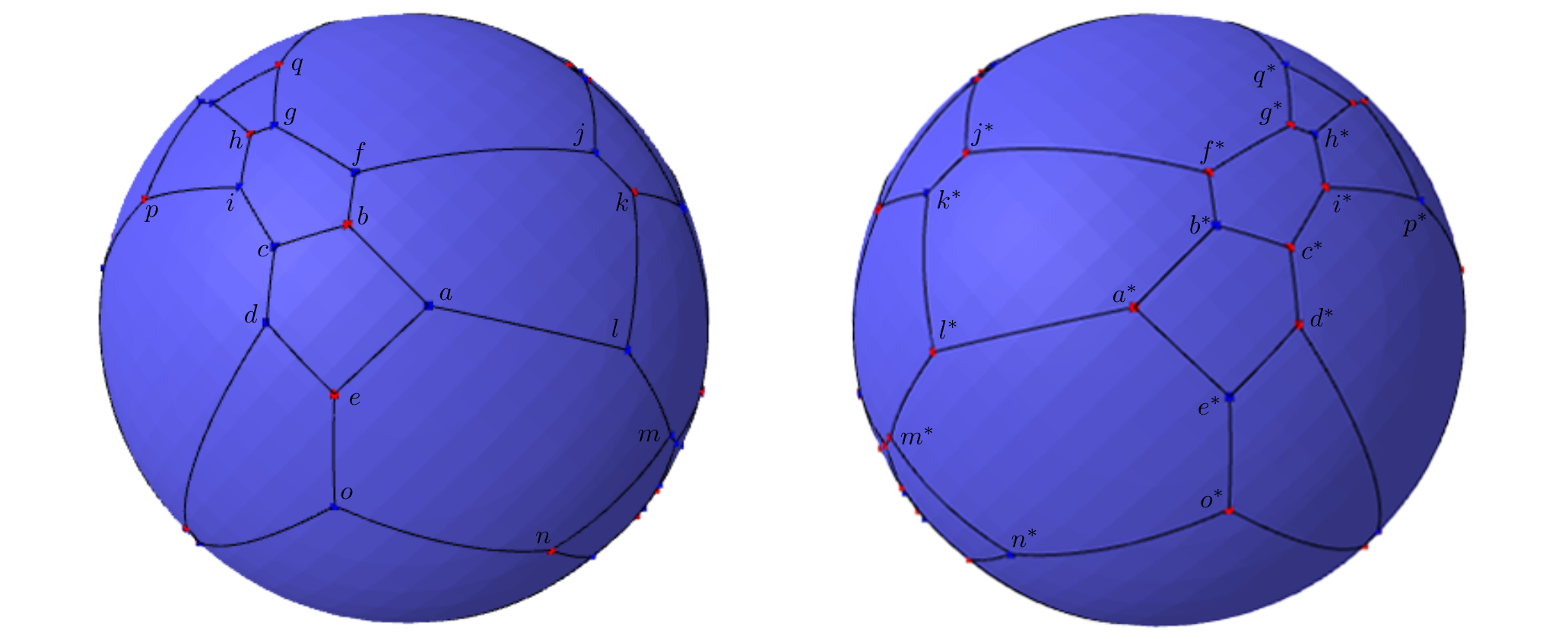}
    			\caption{Two complementary Voronoi diagrams on an sphere $SV_k(U)$ and $SV_{n-k}(U)$, showing the homothetic relation between them and their corresponding antipodal points types. Type~I vertices are blue and type~II vertices are red.}
    			\label{fig:CompltypePoints}
    		\end{center}
    	\end{figure}
    
    \begin{property}\label{prop:complentaryLabelings}
        Let $f(P_k)$ be a face of $SV_k(U)$ and let $f(U\setminus P_k)$ be its corresponding antipodal face in $SV_{n-k}(U)$. $f(P_k)$ and $f(U\setminus P_k)$ use the same labels but in opposite sides, i.e., if $i$ is an interior label of an edge of $f(P_k)$ then it is an exterior label for the corresponding antipodal edge in $SV_{n-k}(U)$. See Figure~\ref{fig:AntipodalLabel}.
    \end{property}
    \begin{proof}
    It follows from Property~\ref{prop:complementaryVoronois} that $f(P_k)$ and $f(U\setminus P_k)$ are antipodal polygons. Then we just need to observe that antipodal polygons are defined by the same bisectors, namely as the intersection of the complementary half-spheres, i.e, their edges are from the same bisectors but the antipodal polygons lie in opposite sides of those bisectors, see Figure~\ref{fig:AntipodalLabel}. Therefore, the statement of Property~\ref{prop:complentaryLabelings} follows from the edge rule. 
    \end{proof}
        
    \begin{theorem}\label{thm:doublecycle}
    $SV_k(U)$ has an orientable double cover consisting of $\vert U\vert =n$ cycles.
    \end{theorem}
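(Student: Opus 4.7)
The plan is to take the $n$ cycles $B_k(i)$, $i\in U$, as the proposed double cover, exactly as foreshadowed in the introduction where the boundaries $B_k(i)$ were already identified as cycles of label-$i$ edges. I would split the argument into three steps: (a)~each $B_k(i)$ is a single cycle; (b)~every edge of $SV_k(U)$ lies in exactly two of the cycles $B_k(1),\ldots,B_k(n)$; and (c)~the collection admits a compatible orientation in the sense required for an orientable double cover.

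For (a), the local ingredient is the vertex rule. At a type~I vertex with label set $\{i,j,\ell\}$ and cyclic pattern $i,i,j,j,\ell,\ell$, the three incident edges carry the label pairs $\{i,j\}$, $\{j,\ell\}$, $\{\ell,i\}$, so the label $i$ appears on exactly two of them; at a type~II vertex with pattern $i,j,\ell,i,j,\ell$ the same three label pairs appear and the count is again two. The subgraph of $SV_k(U)$ formed by edges that bear the label $i$ is therefore $2$-regular on the sphere and decomposes into a disjoint union of simple closed curves. To conclude it is a single cycle rather than several, I would cite the fact from~\cite{EI18} that $R_k(i)$ is star-shaped with respect to $i$; on the sphere this makes $R_k(i)$ simply connected, hence its boundary $B_k(i)$ is connected.

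Step (b) follows from the edge rule: every edge of $SV_k(U)$ lies on the perpendicular bisector of a uniquely determined pair $\{i,j\}\subset U$ and therefore carries exactly the two labels $i$ and $j$, so it belongs to precisely the two cycles $B_k(i)$ and $B_k(j)$.

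For (c), I would orient each $B_k(i)$ so that $R_k(i)$ lies on its left, equivalently so that the label $i$—which by the edge rule is placed on the $R_k(i)$-side of every edge of $B_k(i)$—is always to the left of the traversal direction. On any edge $e$ with labels $i,j$, the cycle $B_k(i)$ then has $i$ on its left while $B_k(j)$ has $j$ on its left; since the edge rule places $i$ and $j$ on opposite sides of $e$, the two orientations induced on $e$ by $B_k(i)$ and $B_k(j)$ are opposite, which is exactly the orientability condition. The step I expect to be the main obstacle is the single-cycle assertion in (a): the vertex rule alone yields $2$-regularity and, a priori, permits several disjoint cycles; bridging this combinatorial gap requires the global, non-combinatorial star-shapedness of $R_k(i)$ from~\cite{EI18}, and I would flag this as the essential external input.
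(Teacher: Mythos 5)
Your proposal is correct and takes essentially the same approach as the paper: the cover consists of the $n$ label cycles $B_k(i)$, the vertex rule gives that each vertex meets zero or two label-$i$ edges, and the edge rule gives both that every edge lies in exactly the two cycles of its labels and that orienting each cycle with $R_k(i)$ (equivalently, with point $i$) on a fixed side makes the two orientations on every edge opposite. The only divergence is the single-cycle step, where you import the star-shapedness of $R_k(i)$ from~\cite{EI18}, whereas the paper argues directly that point $i$ lies inside each label-$i$ cycle and concludes uniqueness from the edge rule (with a pointer to~\cite{dH21} for details); both justifications are legitimate.
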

    \begin{proof}
		From the vertex rule, we see that every vertex of $SV_k(U)$ is incident to zero or two edges with label $i$. Then the set of edges with label $i$ forms a union of cycles. The point corresponding to label $i$ is contained inside each such cycle. By the definition of the edge rule, there can only be one such cycle. (Also see Property 6.1 in~\cite{dH21} for a more extended argumentation.)
		
   % It is not difficult to see that for every $1 \leq i \leq n$, all the edges that have the label $i$ in $SV_k(U)$ form one cycle (also see Property 6.1 in~\cite{dH21}). 
	Since each label
    $i$, corresponding to a point $i \in U$, is inside the corresponding region $R_k(i)$, we can orient all the edges of a cycle with label $i$ clockwise around point $i$; note that point $i$ is also contained in $R_k(i).$  This shows that the cycle cover is orientable. Finally, as there is one cycle for each point of $U$, and since each edge has two labels, it follows that $SV_k(U)$ has an orientable double cover of $n$ cycles.
    \end{proof}   
    
    \begin{figure}[h!]	
    \begin{center}
    	\includegraphics[width=0.5\textwidth]{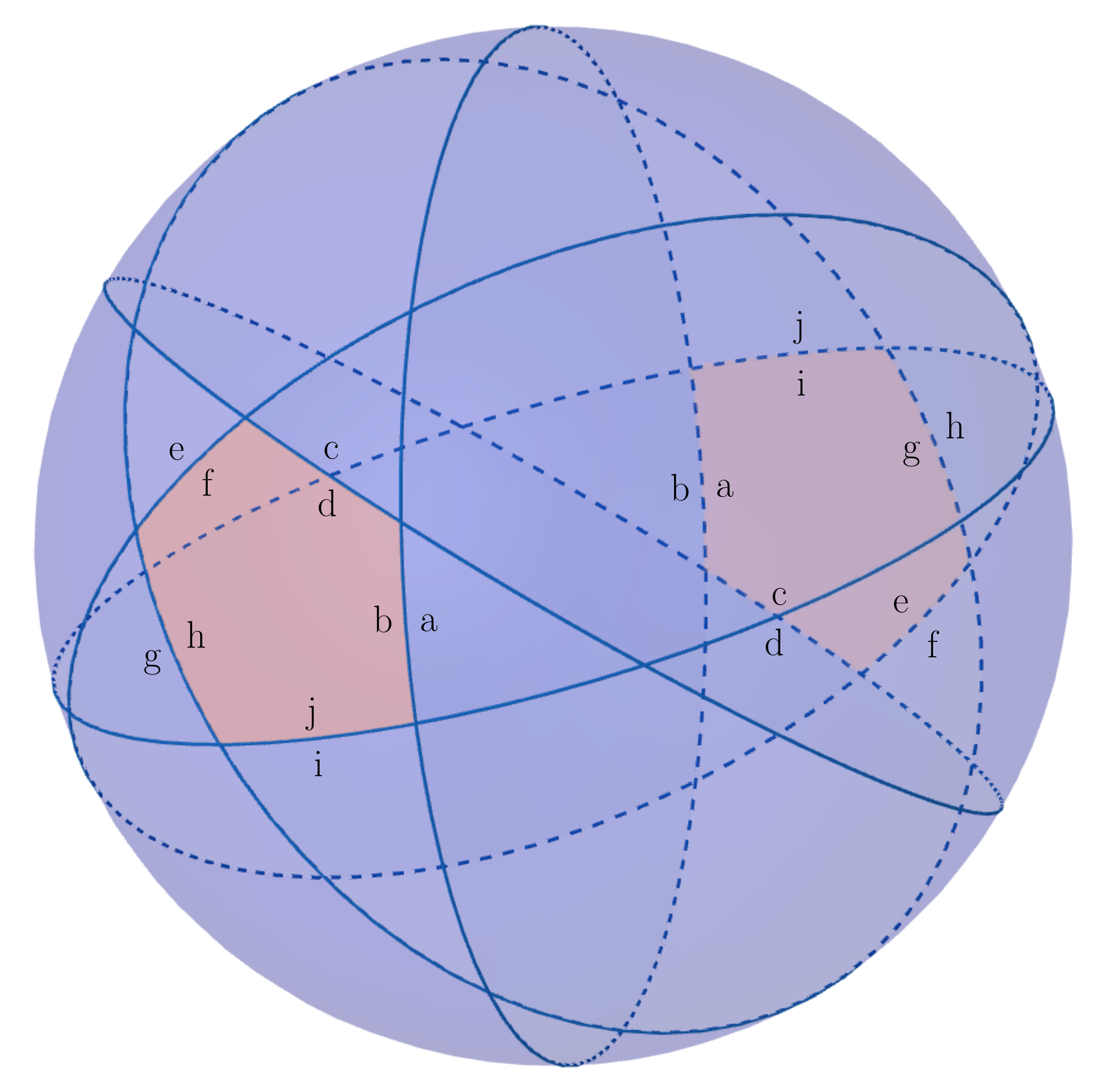}
    	\caption{Two antipodal polygons, one has labels ${b,d,f,h,j}$ in its interior, the other one has these labels in its exterior.}				
    	\label{fig:AntipodalLabel}
    \end{center}
    \end{figure}

\section{Relations between Planar and Spherical Voronoi Diagrams}
    \label{sec:plane-sphere}
    In this section we generalize to Voronoi diagrams of arbitrary order $k$ the construction given in~\cite{NLC02} for the nearest and farthest Voronoi diagrams. We then prove some more properties using this construction. 
    
    First, we need to define the inversion transformation, as it is the basis of the relation between Voronoi diagrams on the sphere and on the plane. 
    
    \begin{definition}
        The inversion transformation is determined by two parameters: The center of inversion $O$ and the radius of inversion $R$. Two points $P$ and $P'$ in $\mathbb{R}^3$ are said to be inverses of each other if:
        \begin{enumerate}
            \item The points $P$ and $P'$ lie in the same half-line with origin in $O$.
            \item  The Euclidean distances $\vert \overline{OP}\vert $ and $\vert \overline{OP'}\vert $ in $\mathbb{R}^3$ satisfy $R^2=\vert \overline{OP}\vert  \vert \overline{OP'}\vert $.
        \end{enumerate}
    \end{definition}
    
    For this transformation it is important to remark the following well-known properties of inversions in $\mathbb{R}^3$ as they are the key of the constructions that we define. For more details about the inversion and its properties see \cite{BAK75}.

    \begin{property}\label{prop:circleNotPass}
        The inverse of any sphere $S$ that does not pass through the center of inversion is a sphere $S'$ that does not pass through the center of inversion. Also, if the center of inversion is not at the interior of $S$, then the interior of $S$ transforms to the interior of $S'$ and the exterior of $S$ transforms into the exterior of $S'$.
    \end{property}
    
    \begin{property}\label{prop:circlePass}
        The inverse of any sphere that passes through the center of inversion is a plane that does not pass through the center of inversion. And the interior of such a sphere transforms to a half-space bounded by that plane and the exterior of the sphere transforms to the opposite half-space.
    \end{property}
    
    Now, we can proceed in a similar way to \cite{B78} to prove the construction for Voronoi diagrams on the sphere of arbitrary order, $SV_k(U)$. From now on, we denote by $S'$ the plane inverse of the sphere $S$, by $U'$ the set of points on the plane $S'$ that are inverses of the points of $U\subset S$, and by $V_k(U')$ the Voronoi diagram of order $k$ in the plane for the set of points $U'$.
    
    \begin{theorem}\label{prop:construction}
        Let $i\notin U$ be a point on the sphere $S$ such that $U\cup\{i\}$ is in general position.
        Let $U'$ be the set of inverse points of $U$ for a chosen inversion radius $r$ and $i$ the center of inversion. Then $SV_k(U)$ is homeomorphic to the union of $V_k(U')$ and $V_{n-k}(U')$, joined by the unbounded edges common to $V_k(U')$ and $V_{n-k}(U')$ (unbounded edges from the same bisector are glued together).
        Moreover, $R_k(i)$ in $SV_k(U\cup\{i\})$ partitions $SV_k(U)$ into two subgraphs that are homeomorphic to $V_k(U')$ and $V_{n-k}(U')$. The vertices of type I (type II) in $V_k(U')$ correspond to the vertices of type I (type II) in $SV_k(U)$ and the vertices of type I (type II) in $V_{n-k}(U')$ correspond to the vertices of type II (type I) in $SV_k(U)$. See Figures~\ref{fig:Vk(U')} and~\ref{fig:Vn-k(U')}.
    \end{theorem}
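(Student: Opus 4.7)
The plan is to set up the inversion $\phi$ with center $i$ and then establish a combinatorial bijection between the vertices, edges, and faces of $SV_k(U)$ and those of the glued union $V_k(U')\cup V_{n-k}(U')$, from which the homeomorphism will follow by planar-graph realisation. Because $i\in S$, Property~\ref{prop:circlePass} ensures that $\phi(S)=S'$ is a plane and that $U':=\phi(U)\subset S'$; because $U\cup\{i\}$ is in general position, no circle on $S$ through three points of $U$ passes through $i$, so Property~\ref{prop:circleNotPass} tells us that the circumscribed circle $C_v$ of any vertex $v$ of $SV_k(U)$ maps to a circle $C_v'$ on $S'$ through the corresponding three points of $U'$, with interior and exterior preserved when $i\notin D_v$ and swapped when $i\in D_v$, where $D_v$ is the spherical disk bounded by $C_v$.

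First I would prove the vertex correspondence. Let $v$ be a vertex of $SV_k(U)$ whose disk $D_v$ contains $m$ points of $U$ in its interior, with $m=k-1$ for type~I and $m=k-2$ for type~II. If $i\notin D_v$, then $D_v'$ contains exactly $m$ points of $U'$, so $v$ matches a vertex of $V_k(U')$ of the same type. If $i\in D_v$, then $D_v'$ contains $n-3-m$ points of $U'$, matching a vertex of $V_{n-k}(U')$ whose type is opposite to that of $v$: $n-3-(k-1)=(n-k)-2$ turns type~I into type~II, and $n-3-(k-2)=(n-k)-1$ turns type~II into type~I. A direct distance check shows that $v$ lies in $R_k(i)$ in $SV_k(U\cup\{i\})$ if and only if $i\in D_v$, because this condition is exactly that $i$ is closer to $v$ than the boundary of $D_v$ and is therefore among the $k$ nearest neighbours of $v$ in $U\cup\{i\}$. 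Consequently $B_k(i)$ separates the vertices mapping to $V_k(U')$ from those mapping to $V_{n-k}(U')$.

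I would then extend the correspondence to edges and faces. An edge of $SV_k(U)$ lies on the great-circle bisector of some pair $a,b\in U$: if both of its endpoints lie outside $R_k(i)$ the whole edge corresponds to a bounded edge on the bisector line of $a',b'$ in $V_k(U')$; if both lie inside, to a bounded edge in $V_{n-k}(U')$; and if the edge crosses $B_k(i)$ it splits into two arcs, each mapping to an unbounded ray on the line $\ell_{a'b'}$, one ray in $V_k(U')$ and one in $V_{n-k}(U')$, which are precisely the two rays glued by the statement. Faces correspond through their defining $k$-subsets: a face $f(P_k)$ entirely outside $R_k(i)$ pairs with the face $f(\phi(P_k))$ of $V_k(U')$, a face entirely inside pairs with the face $f(\phi(U\setminus P_k))$ of $V_{n-k}(U')$ (using Property~\ref{prop:complementaryVoronois} to read the complementary label), and a face cut by $B_k(i)$ splits into one piece in each diagram.

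The hard part will be the bookkeeping at $B_k(i)$: one must check that the cyclic order of edges around every vertex and around every face is consistent under the bijection on both sides, and that the two unbounded rays glued at infinity on each line $\ell_{a'b'}$ recompose an edge of $SV_k(U)$ consistently with the edge and vertex rules from Section~\ref{sec:properties}. Once this matching is complete, both objects are realised as connected embedded planar graphs on a topological sphere (the sphere $S$ on one side, and the one-point compactification of $S'$ carrying the glued diagram on the other), and the homeomorphism of the underlying surfaces follows from standard planar-graph arguments.
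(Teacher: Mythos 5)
Your proposal is correct and follows the same overall architecture as the paper's proof: invert from the center $i\in S$, classify each vertex and each point of an edge of $SV_k(U)$ by the position of $i$ relative to its defining circle, identify the circles through $i$ (centers $abi$ on $B_k(i)$) with points at infinity of the bisectors of $a',b'$, and glue the two planar diagrams along the resulting unbounded rays. The one place where you genuinely diverge is the case where the defining circle encloses $i$: you count directly, using the fact that inversion swaps interior and exterior when the center of inversion lies inside the sphere, so a disk with $m$ points of $U$ maps to a disk with $n-3-m$ points of $U'$, which immediately gives the type exchange (I$\leftrightarrow$II) with $V_{n-k}(U')$. The paper instead reduces this case to the first one via the antipodal map, using Properties~\ref{prop:complementaryVoronois}--\ref{prop:complentaryLabelings} to identify the part of $SV_k(U)$ inside $R_k(i)$ with the part of $SV_{n-k}(U)$ outside $R_{n-k}(i)$. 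Your route is shorter and more self-contained, but note that Property~\ref{prop:circleNotPass} as stated only asserts interior/exterior preservation when the center of inversion is \emph{not} interior to the sphere; the swapped case you invoke is true and standard, but you must state and justify it (the paper's antipodal detour exists precisely to avoid needing it). Two smaller points: your endpoint-based classification of edges should be phrased pointwise (or allow an edge to be split at every crossing with $B_k(i)$), since in principle an edge need not cross $B_k(i)$ at most once, and the final cyclic-order/gluing bookkeeping that you flag as the hard part is left at essentially the same informal level in the paper, so no additional gap arises there.
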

    \begin{proof}
    
        By Property~\ref{prop:circlePass}, since $i$ is a point of $S$, the inverse of the sphere is a plane $S'$. 
        Let $abc$ be a vertex of $SV_k(U)$. By definition, vertex $abc$ is the center of a circle passing through points $a$, $b$ and $c$ of $U$ that either encloses $k-1$ or $k-2$ points of $U$. This circle is contained in a sphere $S_{abc}$ with the same center and radius. Points in the edges of $SV_k(U)$ are centers of circles that pass through two points of $U$ and enclose $k-1$ of $U$, and, in the same way, all such circles are contained in a sphere with the same center and radius. We denote with $C$ a circle centered in a vertex $abc$ or an edge of $SV_k(U)$, where $C$ passes through three, respectively two, points of $U$. The inversion maps $C$ to a circle $C'$ or a line in the plane. In order to obtain the correspondence between vertices and edges of $SV_k(U)$ and $V_k(U')$, respectively $V_{n-k}(U')$, we determine how many points of $U'$ are enclosed by $C'$.
        We distinguish three cases:
        
        Case 1: $C$ does not enclose or pass through the inversion center.\\
        Since the inversion center $i$ is at the exterior of $S$, $i$ is also at the exterior of the corresponding sphere with the same center and radius as $C$. In order to prove that there is an one-to-one correspondence between vertices and edges of $SV_k(U)$ whose defining circles $C$ satisfy the condition of Case 1, and the vertices and bounded edges of $V_k(U')$, we use Property~\ref{prop:circleNotPass}.
				
        First, if the circle $C$ has center $abc$ and passes through $a$, $b$ and $c$, so also does the sphere $S_{abc}$. 
        By Property~\ref{prop:circleNotPass}, the inverse of $S_{abc}$ is a sphere whose intersection with $S'$ is a circle $C'$ that passes through $a'$, $b'$ and $c'$ (the inverse points of $a$, $b$ and $c$) and encloses the inverse of the points in the interior of $S_{abc}$. Thus, $C'$ is a circle with center $a'b'c'$ that passes through the inverse points of $a$, $b$ and $c$ and encloses the same number of points ($k-1$ or $k-2$) of $U'$ as $C$. Therefore, $a'b'c'$ is a vertex of $V_k(U')$ and it is of the same type as $abc$. From that, we have that there is an one-to-one correspondence between vertices of $SV_k(U)$ that satisfy the condition of Case 1, and the vertices of $V_k(U')$. 
        Second, since the points in the edges of $SV_k(U)$ are centers of circles that pass through two points of $U$ and enclose $k-1$ points of $U$, if they do not contain or pass through the inversion center we can proceed in the same way to show that the inverses of such edges of $SV_k(U)$ are edges of $V_k(U')$. See Figure~\ref{fig:Vk(U')}.
        
        Case 2: $C$ encloses the inversion center.\\
        If $C$ encloses the inversion center then the corresponding sphere with the same center and radius as $C$, does so too. In order to prove that there is an one-to-one correspondence between vertices and edges of $SV_k(U)$ whose defining circles $C$ satisfy the condition of Case 2, and the vertices and bounded edges of $V_{n-k}(U')$, we use Properties~\ref{prop:complementaryVoronois} and \ref{prop:complentaryLabelings} to reduce Case 2 to Case 1.\\
        First, if the circle $C$ has center $abc$ and passes through points $a$, $b$, and $c$ of $U$, so also does the sphere $S_{abc}$.
        By Property~\ref{prop:circleNotPass}, the inverse of $S_{abc}$ is a sphere whose intersection with $S'$ is a circle that passes through $a'$, $b'$ and $c'$. But, as it contains $i$, the number of points of $U'$ that are enclosed is not the same as the number of points of $U$ enclosed by $S_{abc}$. Note that $abc$ is a point in the interior of $R_k(i)$ because if $C$ encloses $i$, this means that $i$ is one of the $k$ nearest neighbors of $abc$ in $SV_k(U\cup\{i\})$. By Properties \ref{prop:complementaryVoronois} and \ref{prop:complentaryLabelings}, the boundary of the antipodal polygon of $R_k(i)$ has label $i$ in the exterior, i.e., the interior of the antipodal polygon of $R_k(i)$ is the exterior of $R_{n-k}(i)$. Then, the induced graph by $SV_k(U)$ in the interior of $R_k(i)$ in $SV_k(U\cup\{i\})$ is homeomorphic to the induced graph by $SV_{n-k}(U)$ in the exterior of $R_{n-k}(i)$. See Figure~\ref{fig:R_k(i)-R_n-k(i)}. From that, for $SV_{n-k}(U)$ we are in Case 1, and $V_{n-k}(U')$ is homeomorphic to the induced graph by $SV_{n-k}(U)$ in the exterior of $R_{n-k}(i)$ and, therefore, is homeomorphic to the induced graph by $SV_k(U)$ in the interior of $R_k(i)$. By Property~\ref{prop:antipodalstype}, the corresponding vertex to a type I (type II) vertex in $V_{n-k}(U')$ is of type II (type I) in $SV_k(U)$. See Figure~\ref{fig:Vn-k(U')}.\\
        Second, the same argument applies for $C$ centered at a point in an edge of $SV_k(U)$.
        
        Case 3: $C$ passes through the inversion’s center.\\
        If $C$ passes through the inversion center $i$, then the corresponding sphere with the same center and radius as $C$ does so too. In order to prove that there is an one-to-one correspondence between edges of $SV_k(U)$ whose defining circles $C$ satisfy the condition of Case 3 (such edges contain the center of $C$) and the unbounded edges of $V_k(U')$ and $V_{n-k}(U')$, we use Property~\ref{prop:circlePass}.
				
        Note that $abc$ cannot be the center of a circle passing through $i$, $a$, $b$ and $c$, because $U\cup\{i\}$ is in general position. Therefore, we can assume that the center of the circle that passes through $i$ must be of the form $abi$ where $a$ and $b$ are points of $U$, i.e., $abi$ is a vertex of a Voronoi diagram of some order for $U\cup\{i\}$. And $abi$ is also a point on an edge of $SV_k(U)$. The points on the edges of a Voronoi diagram of order $k$ are centers of circles that pass through two points of $U$ and enclose $k-1$ points of $U$. Then $abi$ must be a vertex of type I in $SV_k(U\cup\{i\})$. 
        %The point $abi$ is a convex vertex of $B_k(i)$, by Property 6.5 of \cite{MACA21}. 
        By Property~\ref{prop:circlePass}, the inverse of the sphere $S_{abi}$ is a plane that intersects with $S'$ in a line that passes through $a'$ and $b'$.
        Finally, since lines can be regarded as circles with center at infinity, such point $abi$ corresponds precisely to the point at infinity of the perpendicular bisector between $a'$ and $b'$. 
        Note that, since these points on the edges of $SV_k(U)$ correspond to points at infinity, those edges split into the unbounded edges of $V_k(U')$ and $V_{n-k}(U')$. See Figures~\ref{fig:Vk(U')} and~\ref{fig:Vn-k(U')}.
    \end{proof}
    
    \begin{figure}[h!]
    	\centering
    		\includegraphics[scale=0.5]{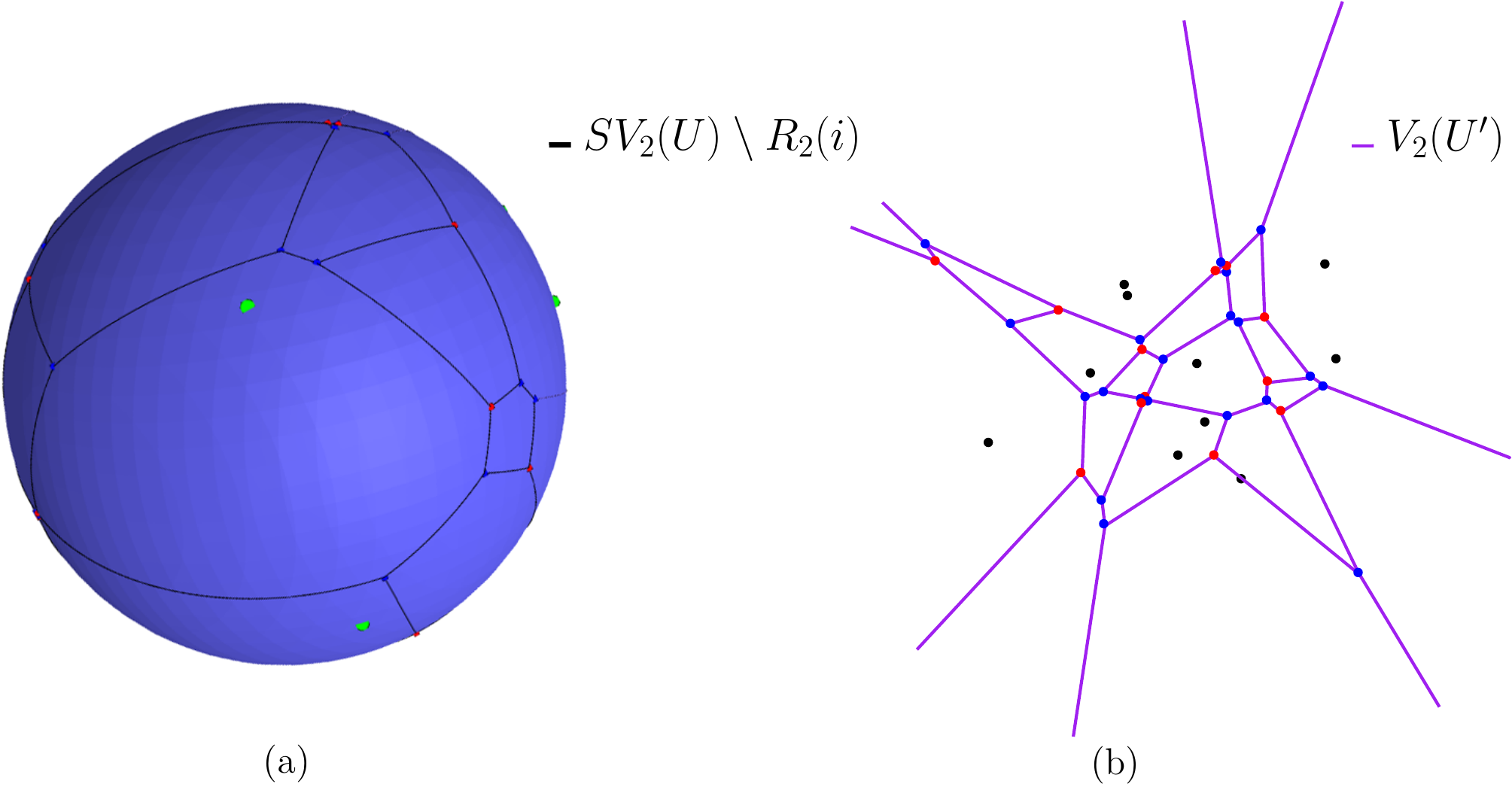}
    	\caption{For a set $U$ of ten points on the sphere (the visible ones are drawn in green color): The picture shows the homeomorphism between: (a) The induced graph by $SV_2(U)$ at the exterior of $R_2(i)$ in $SV_2(U\cup\{i\})$. (b) The planar Voronoi diagram of order ${2}$ for the points of $U'$ (black color).}
    	\label{fig:Vk(U')}
    \end{figure}
    
    \begin{figure}[h!]
    	\centering
    		\includegraphics[scale=0.5]{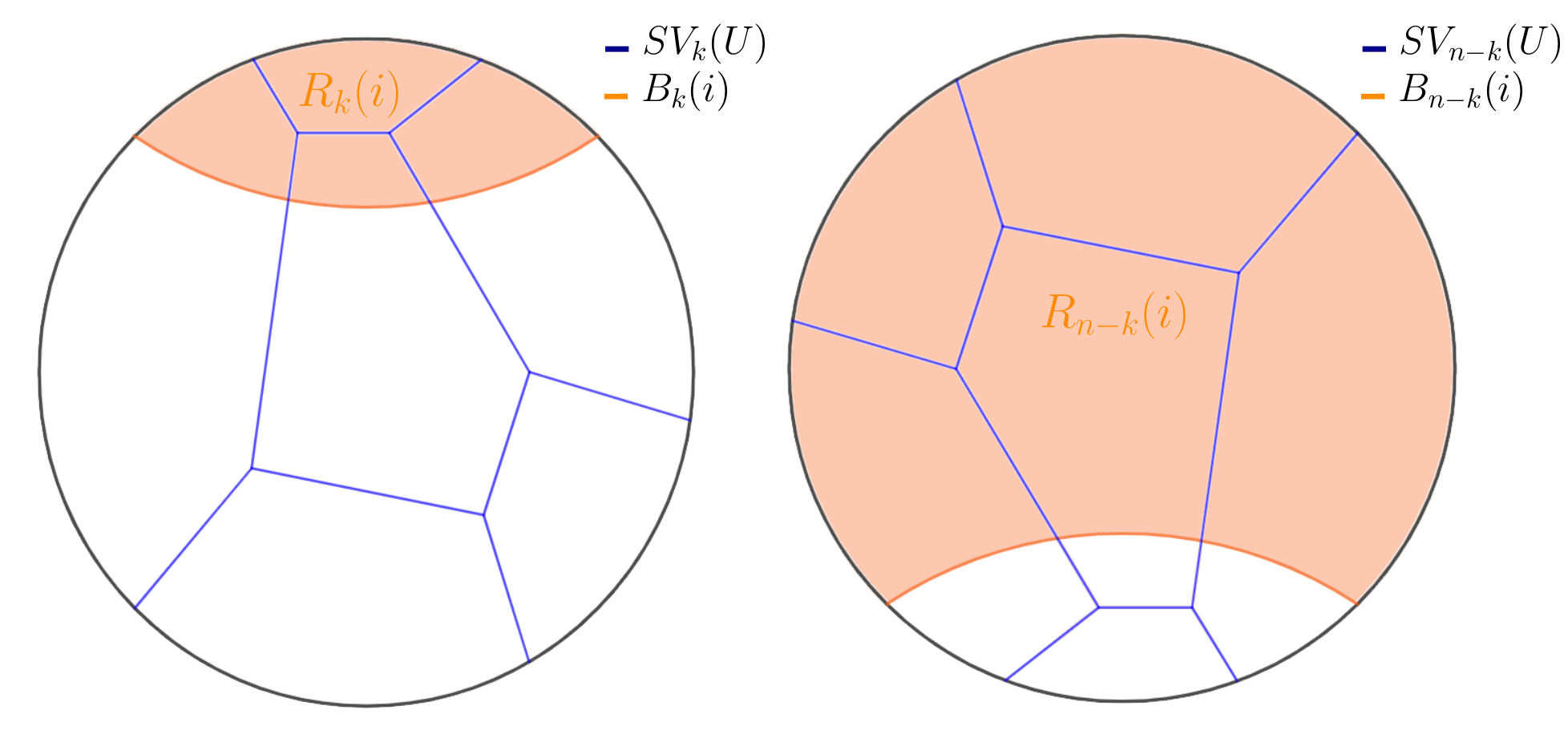}
    	\caption{For a set $U$ of points on the sphere. The picture shows the homeomorphism between: (a) The induced graph by $SV_k(U)$ at the interior of $R_k(i)$ in $SV_k(U\cup\{i\})$. (b) The induced graph by $SV_{n-k}(U)$ at the exterior of $R_{n-k}(i)$ in $SV_{n-k}(U\cup\{i\})$.}
    	\label{fig:R_k(i)-R_n-k(i)}
    \end{figure}
    
    \begin{figure}[h!]
    	\centering
    		\includegraphics[scale=0.51]{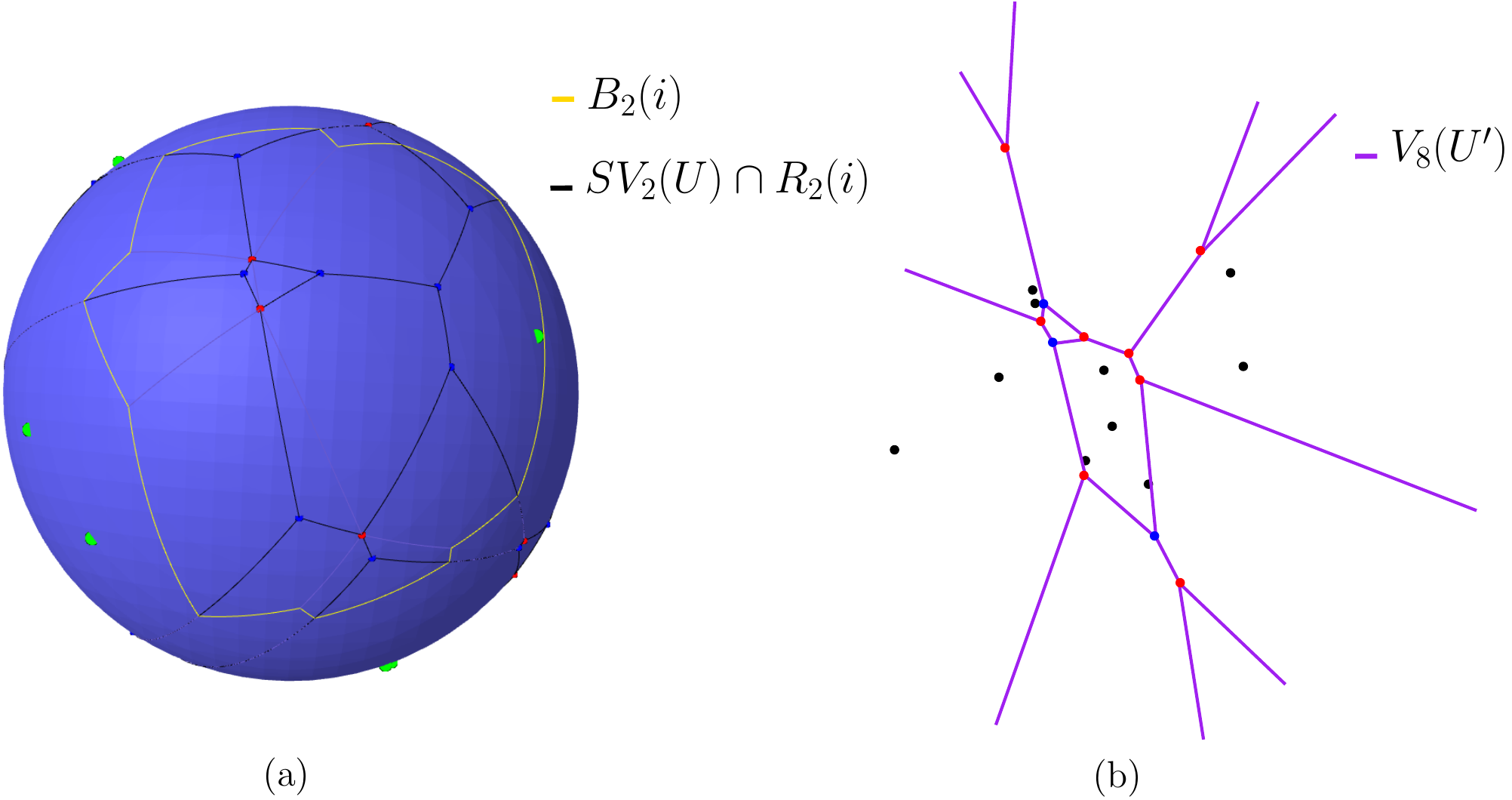}
    	\caption{For a set $U$ of ten points on the sphere (the visible ones are drawn in green color): The picture shows the homeomorphism between: (a) The induced graph by $SV_2(U)$ at the interior of $R_2(i)$ in $SV_2(U\cup\{i\})$. (b) The planar Voronoi diagram of order ${8}$ for the points of $U'$ (black color).}
    	\label{fig:Vn-k(U')}
    \end{figure}
    
    Theorem~\ref{prop:construction} tells us how to construct $SV_k(U)$: we just have to invert the points of $U$, compute planar Voronoi diagrams $V_k(U')$ and $V_{n-k}(U')$, and map them to the sphere as follows: each vertex $a'b'c'$ of either $V_k(U')$ or $V_{n-k}(U')$ corresponds to a vertex $abc$ of $SV_k(U)$ ($abc$ is center of the circle that passes through $a$, $b$ and $c$ on the sphere); vertices in $SV_k(U)$ are adjacent whenever the corresponing vertices in $V_k(U')$ or in $V_{n-k}(U')$ are adjacent.
    %the vertices get connected in the same way as in the plane, i.e., if $a'b'c'$ is connected to $d'e'f'$ in the plane then $abc$ is connected to $def$ on the sphere. 
    Finally, the vertices of $SV_k(U)$ corresponding to vertices incident to an unbounded edge from the same bisector in  $V_k(U')$ and $V_{n-k}(U')$ get connected.
    
 Let us shortly comment on the computational complexity of constructing higher order Voronoi diagrams on the sphere. The inversion is a linear time transformation and, once we have the planar Voronoi diagrams, mapping them to the sphere also only requires linear time. Therefore, the computational time for constructing the spherical Voronoi diagrams is bounded by the computational time for the planar ones. 
    
    Now, from these constructions, it is easy to see that properties proved for the plane \cite{MACA21} must be true for the sphere. We can prove easily some properties on the sphere using results from the plane, but also we can prove properties in the plane using the sphere. Next, we show that the number of vertices of type I (type II) in $SV_k(U)$ only depends on the number $n$ of points of $U$, but not on their positions on the sphere.
    
    \begin{theorem}\label{prop:Ntype}
        For a set $U$ of $n$ points on the sphere, the number of vertices of type I in $SV_k(U)$ is $2k(n-k-1)$ and the number of vertices of type II is $2(k-1)(n-k)$.
    \end{theorem}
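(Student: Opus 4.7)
The plan is to reduce the spherical counting problem to two planar higher-order Voronoi diagrams via Theorem~\ref{prop:construction}, and then to invoke the corresponding planar identities from~\cite{MACA21}.

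First I would pick any point $i\notin U$ such that $U\cup\{i\}$ is in general position and apply Theorem~\ref{prop:construction} with $i$ as the inversion center. That theorem identifies the vertex set of $SV_k(U)$ with the disjoint union of the vertex sets of the planar diagrams $V_k(U')$ and $V_{n-k}(U')$, with the type~I/type~II classification preserved along $V_k(U')$ and interchanged along $V_{n-k}(U')$. Writing $N_j(P)$ and $O_j(P)$ for the numbers of type~I (new) and type~II (old) vertices of the planar diagram $V_j(P)$, this yields
\[
V_I(SV_k(U)) = N_k(U') + O_{n-k}(U'), \qquad V_{II}(SV_k(U)) = O_k(U') + N_{n-k}(U').
\]

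Next I would invoke the analogous planar identities established via the planar edge labeling in~\cite{MACA21}: for every $n$-point set $P$ in general position in the plane,
\[
N_k(P) + O_{n-k}(P) = 2k(n-k-1), \qquad O_k(P) + N_{n-k}(P) = 2(k-1)(n-k).
\]
Substituting $P=U'$ immediately delivers the claimed spherical formulas. As a sanity check, the identity $O_j(P) = N_{j-1}(P)$ (both counts refer geometrically to circumcircles of triples enclosing exactly $j-2$ other points) already reduces the first identity to $N_k(U')+N_{n-k-1}(U') = 2k(n-k-1)$, matching the base case $k=1$ where $N_1(U')$ counts Delaunay triangles and $N_{n-2}(U')$ counts their farthest-point analogues.

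The main obstacle is the planar identity above. Individually, $N_k(P)$ and $O_k(P)$ depend on the convex hull of $P$ (through the number of unbounded cells of $V_k(P)$); what is configuration-independent is precisely the paired sums $N_k+O_{n-k}$ and $O_k+N_{n-k}$. Conceptually, this is also why the spherical formulas depend only on $n$ and $k$: the sphere has no boundary, and the gluing in Theorem~\ref{prop:construction} along the common unbounded edges of $V_k(U')$ and $V_{n-k}(U')$ absorbs the hull-dependent contributions of the two planar diagrams into each other.
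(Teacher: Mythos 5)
Your proposal is correct and follows essentially the same route as the paper: reduce to the plane via Theorem~\ref{prop:construction}, identify type~I (type~II) vertices of $SV_k(U)$ with the union of type~I vertices of $V_k(U')$ and type~II vertices of $V_{n-k}(U')$ (respectively the complementary pairing), and then invoke the configuration-independent paired sum for the numbers of circles through three points enclosing $k-1$ and $n-k-2$ points. The only difference is bibliographic: the paper sources that paired identity to Theorem~5.3 of~\cite{L03} (equivalently~\cite{L82, ardila2004number, clarkson1989applications}) rather than to~\cite{MACA21}, but the underlying fact is the same, and your observation that only the paired sums, not the individual counts, are hull-independent is exactly the point.
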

    \begin{proof}
        By Theorem~\ref{prop:construction}, we can define an inversion transformation such that there is a one-to-one correspondence between the vertices of $SV_k(U)$ and the vertices of $V_k(U')$ and $V_{n-k}(U')$. Vertices of type I of $SV_k(U')$ and vertices of type II of $V_{n-k}(U')$ correspond to the vertices of type I in $SV_k(U)$. Then, the number of vertices of type I in $SV_k(U)$ is the sum of type I vertices of $V_k(U')$ and type II vertices of $V_{n-k}(U')$ which correspond to the circles enclosing $k-1$ points of $U'$ and circles enclosing $n-k-2$ points of $U'$, respectively. We denote the number of such circles with $c_{k-1}$ and $c_{n-k-2}$. By Theorem 5.3 of \cite{L03}, also see~\cite{ardila2004number, clarkson1989applications,L82}, we have
        \begin{equation}\label{eq1}
             c_{k-1}+c_{n-k-2}=2(k-1+1)(n-2-k+1)=2k(n-k-1).
        \end{equation}
        
        Then, the number of vertices of type I in $SV_k(U)$ is $2k(n-k-1)$. Similarly, we can compute the number of vertices of type II as the sum of vertices of type II in $V_k(U')$ and type I in $V_{n-k}(U')$, i.e., the number of the circles enclosing $k-2$ points of $U'$, $c_{k-2}$, and enclosing $n-k-1$ points of $U'$, $c_{n-k-1}$. Again, using Theorem 5.3 of \cite{L03}, we have
         \begin{equation}\label{eq1}
             c_{k-2}+c_{n-k-1}=2(k-2+1)(n-2-k+2)=2(k-1)(n-k).
        \end{equation}
        Then, the number of vertices of type II in $SV_k(U)$ is $2(k-1)(n-k)$.
    \end{proof}
    
    \begin{theorem}\label{prop:NedgesNfaces}
        For a set $U$ of $n$ points on the sphere, the order $k$ Voronoi diagram $SV_k(U)$ has $4kn-4k^2-2n$ vertices, $6kn-6k^2-3n$ edges and $2kn-2k^2-n+2$ faces.
    \end{theorem}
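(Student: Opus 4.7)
The plan is to derive all three counts from Theorem~\ref{prop:Ntype} together with two structural facts: (i) each vertex of $SV_k(U)$ has degree exactly $3$, and (ii) $SV_k(U)$ is a connected planar graph drawn on the sphere, so Euler's formula $V-E+F=2$ applies.

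First I would compute $V$ by simply summing the two counts from Theorem~\ref{prop:Ntype}: the number of type~I vertices is $2k(n-k-1)$ and the number of type~II vertices is $2(k-1)(n-k)$, so
\[
V \;=\; 2k(n-k-1) + 2(k-1)(n-k) \;=\; 4kn-4k^2-2n,
\]
which is a short algebraic verification.

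Next I would invoke the general position assumption on $U$ (no three points on a great circle, no four cocircular). By the definitions of type~I and type~II vertices, each vertex of $SV_k(U)$ is the center of a circle through exactly three points of $U$, so in the edge labeling of Section~\ref{sec:properties} each vertex is incident to exactly three bisector arcs, i.e. has degree $3$. By the handshake identity this yields
\[
E \;=\; \tfrac{3}{2}\,V \;=\; 6kn-6k^2-3n.
\]

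Finally, since $SV_k(U)$ is a connected graph embedded on $S$ (the region decomposition of a sphere into faces, edges, vertices), Euler's formula gives
\[
F \;=\; 2 - V + E \;=\; 2 - (4kn-4k^2-2n) + (6kn-6k^2-3n) \;=\; 2kn-2k^2-n+2.
\]

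The only non-trivial step is justifying degree $3$ and connectedness; both follow from general position and from the construction in Theorem~\ref{prop:construction}, which exhibits $SV_k(U)$ as a planar embedding homeomorphic to the union of two planar higher-order Voronoi diagrams glued along their unbounded edges. I would briefly cite Theorem~\ref{prop:construction} to ensure connectedness and the degree-$3$ property transfer from the planar case, after which the arithmetic above closes the proof.
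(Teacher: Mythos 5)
Your proposal is correct and follows essentially the same route as the paper: sum the type~I and type~II counts from Theorem~\ref{prop:Ntype} to get the vertex count, use the degree-$3$ property and handshaking to get $\vert E\vert=\tfrac{3}{2}\vert V\vert$, and finish with Euler's formula on the sphere. Your extra remarks justifying degree $3$ (general position) and connectedness (via Theorem~\ref{prop:construction}) only make explicit what the paper leaves implicit.
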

    \begin{proof}
        Vertices of spherical Voronoi diagrams are either of type I or type II, so the total number of vertices is the sum of vertices of the two types. Then, by Theorem~\ref{prop:Ntype}, the number of vertices $\vert  V\vert $ is 
        \begin{equation}\label{eq:vertices}
            \vert V\vert =2k(n-k-1)+2(k-1)(n-k)=4kn-4k^2-2n.
        \end{equation}
        
				We count the number of edges $\vert E \vert$ of $SV_k(U)$ by double counting. Note that each vertex has degree three in $SV_k(U).$ Since each edge is incident to two vertices, we get 
%
	%			Now, as each vertex has degree three in $SV_k(U)$, we can count the total number of edges. Since each edge is incident to two vertices, by double counting, the number of edges $\vert E\vert $ is 
        \begin{equation}\label{eq:edges}
            \vert E\vert =\frac{3}{2}\left(-4k^2+4kn-2n\right)=6kn-6k^2-3n.
        \end{equation}
        Finally, as $SV_k(U)$ is a planar graph, we can apply Euler's formula to count the number of faces $\vert F\vert $, and we have 
        \begin{equation}\label{eq:faces}
            \vert F\vert =2-(-4k^2+4kn-2n)+(-6k^2+6kn-3n)=2kn-2k^2-n+2.
        \end{equation}
    \end{proof}

    \subparagraph*{Acknowledgments.} Research of C. Huemer and of M. Claverol was supported by project PID2019-104129GB-I00/ MCIN/ AEI/ 10.13039/501100011033.

    {\begin{minipage}[l]{0.3\textwidth} \includegraphics[trim=10cm 6cm 10cm 5cm,clip,scale=0.15]{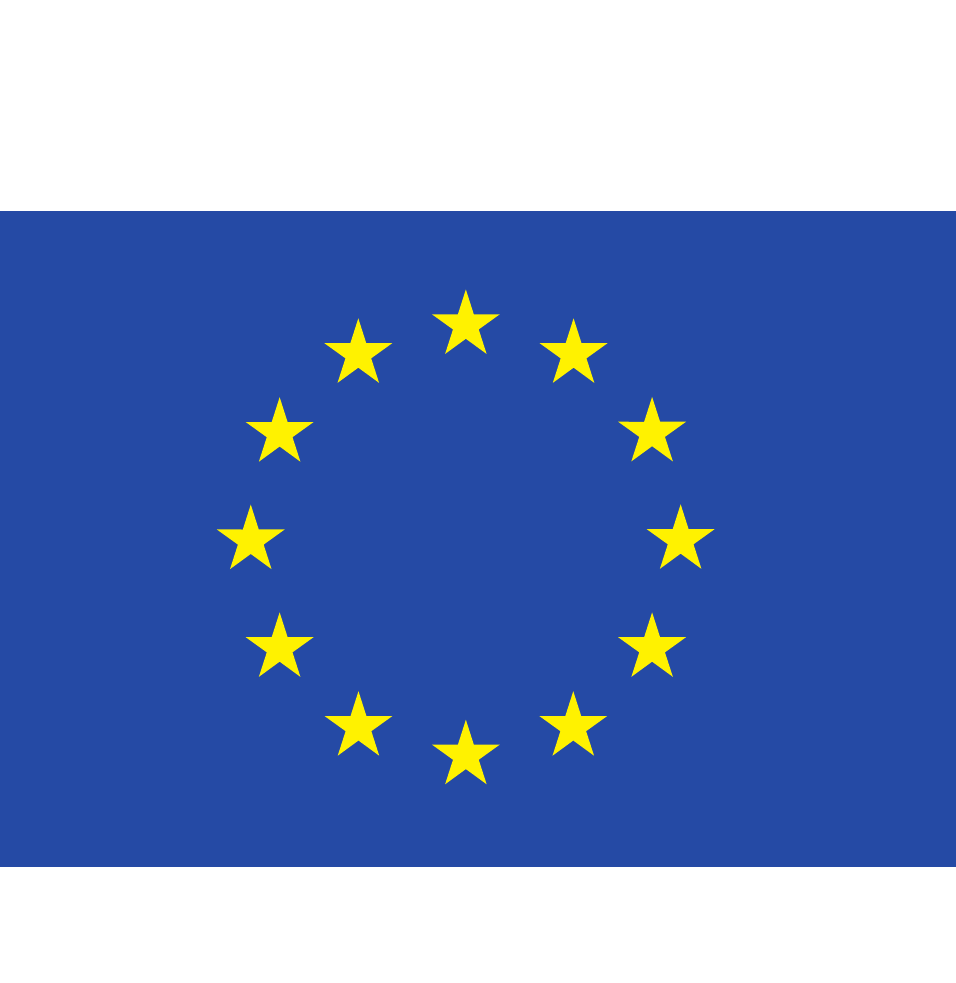} \end{minipage}
     \hspace{-2.2cm} 
     \begin{minipage}[l][0.5cm]{0.8\textwidth}
     	  This project has received funding from the European Union's Horizon 2020 research and innovation programme under the Marie Sk\l{}odowska-Curie grant agreement No 734922.
     \end{minipage}}

\bibliographystyle{plain}
\bibliography{SphericalVoronoiDiagrams}  %%% Uncomment this line and comment out the ``thebibliography'' section below to use the external .bib file (using bibtex) .

%%% Uncomment this section and comment out the \bibliography{references} line above to use inline references.
% \begin{thebibliography}{1}

% 	\bibitem{kour2014real}
% 	George Kour and Raid Saabne.
% 	\newblock Real-time segmentation of on-line handwritten arabic script.
% 	\newblock In {\em Frontiers in Handwriting Recognition (ICFHR), 2014 14th
% 			International Conference on}, pages 417--422. IEEE, 2014.

% 	\bibitem{kour2014fast}
% 	George Kour and Raid Saabne.
% 	\newblock Fast classification of handwritten on-line arabic characters.
% 	\newblock In {\em Soft Computing and Pattern Recognition (SoCPaR), 2014 6th
% 			International Conference of}, pages 312--318. IEEE, 2014.

% 	\bibitem{hadash2018estimate}
% 	Guy Hadash, Einat Kermany, Boaz Carmeli, Ofer Lavi, George Kour, and Alon
% 	Jacovi.
% 	\newblock Estimate and replace: A novel approach to integrating deep neural
% 	networks with existing applications.
% 	\newblock {\em arXiv preprint arXiv:1804.09028}, 2018.

% \end{thebibliography}

\end{document}